\newtheorem{Lemma}{Lemma}
\newtheorem{Corollary}{Corollary}
\newtheorem{lemma}[Lemma]{$\mathbf{Lemma}$}
\newtheorem{corollary}[Corollary]{$\mathbf{Corollary}$}
\newcounter{problem}
\newcounter{save@equation}
\newcounter{save@problem}
\begin{document}%%
\title{ \vspace{-0.8em}\LARGE{Resolution of Near-Field Beamforming and Its Impact on NOMA }}

\author{ Zhiguo Ding, \IEEEmembership{Fellow, IEEE}   \thanks{ 
  
\vspace{-1.5em}

 Z. Ding is with Department of Electrical Engineering and Computer Science, Khalifa University, Abu Dhabi, and   also with Department of Electrical and Electronic Engineering, University of Manchester, Manchester, UK.    

  }\vspace{-2.6em}}
 \maketitle

\vspace{-1em}
\begin{abstract}
The resolution of near-field beamforming is an important metric to measure how effectively   users with different locations can be distinguished. 
This letter identifies the condition under which the resolution of near-field beamforming is not perfect.  This imperfect resolution means that a legacy user's preconfigured near-field beam can be still useful to other users, which motivates the application of  non-orthogonal multiple access (NOMA). Both the   analytical and simulation results are presented  to demonstrate that indeed those near-field beams preconfigured for legacy users can   be used  to  serve additional NOMA users effectively, which improves the overall system throughput and connectivity. 
\end{abstract}\vspace{-0.5em}

\begin{IEEEkeywords}
Near-field communication,   non-orthogonal multiple access (NOMA), resolution of near-field beamforming. 
\end{IEEEkeywords}
\vspace{-1em} 

 \section{Introduction}
 Near-field beamforming  has recently received a lot of attention, due to the use of high carrier frequency bands, such as millimeter-wave (mmWave) and Terahertz  (THz) bands, which make   the Rayleigh distance   significantly large \cite{7400949,devar}. One exciting feature of near-field beamforming is that users with different locations can be perfectly  distinguished,  as explained by the following downlink multi-input single-output (MISO) example. For conventional far-field beamforming, the simplified beam-steering vector is used to model a user's  channel vector, and hence   two users with identical angles of departure but different distances from the base station share the same beam-steering vector, which makes it difficult for the base station  to distinguish the users \cite{7561012}. On the contrary, near-field beamforming can ensure that the two users are effectively  distinguished due to  the use of    the sophisticated  spherical-wave propagation model \cite{BJORNSON20193, 9738442}. This exciting  feature has led to the recent studies for   resource allocation in near-field communication networks, the improvement of degrees of freedom, and also the design of location based multiple access (LDMA) \cite{9738442,ldmadai, yuanweinear}. 

The aim of this letter is to provide a detailed study for the feasibility of near-field beamforming to distinguish  randomly located users. To facilitate the performance analysis, we first define the resolution of near-field beamforming in this letter, which is the cross-correlation of the users'  spherical-wave channel vectors.  The existing literature shows that the resolution of near-field beamforming becomes perfect, if the number of the antennas at the base station becomes infinity \cite{ldmadai}. In this letter, it shown that, unless the users are clustered very close to the base station, the resolution of near-field beamforming is not perfect.  In particular, analytical results are developed to show that for the users which have identical angles of departure,  the correlation of their channel vectors becomes almost one, i.e., the resolution of near-field beamforming is poor, if their distances to the base station are proportional to the Rayleigh distance. This conclusion holds even if  the number of antennas at the base station goes to infinity. The imperfect resolution of near-field beamforming means that  one user's  beam can be still useful to the others, which motivates the second contribution of this letter   for investigating  the feasibility of applying non-orthogonal multiple access (NOMA) in near-field networks \cite{10129111}. In particular, consider that there exists a near-field legacy network, where spatial  beams have been preconfigured for the legacy users. The carried-out feasibility study is to serve additional NOMA users by using these preconfigured beams.  Both the   analytical and simulation results are presented  to show that these preconfigured near-field beams   can indeed be used  to effectively serve additional NOMA users, which improves the overall   system throughput and connectivity. 

\section{System Model}
Consider a downlink network  with  $M$ legacy users, denoted by ${\rm U}_m^{\rm L}$, $m\in\{0, \cdots, M\}$, where each user is equipped with a single antenna, and the base station uses a uniform linear array (ULA) with $N$ elements. ${\rm U}_m^{\rm L}$'s observation is given  by $
y_m^{\rm L} = \mathbf{h}_m^H\sum^{M}_{i=1}\mathbf{p}_ix_i+n_m^{\rm L}$,
where  ${\rm U}_m^{\rm L}$'s channel vector is denoted by  $ \mathbf{h}_m$ and    based on the spherical-wave propagation model  because the legacy users' distances to the base station are assumed to be smaller than the Rayleigh distance, denoted by $d_{\rm Ray} $\cite{devar,9738442,Eldar2}, i.e., $\mathbf{h}_m= \sqrt{N \gamma_{{\boldsymbol \psi}^{\rm L}_m}}\mathbf{b}\left({\boldsymbol \psi}^{\rm L}_m\right)$, ${\boldsymbol \psi}^{\rm L}_m$ denotes the $m$-th legacy user's location based on the Cartesian coordinate, 
 \begin{align}\label{near}
\mathbf{b}\left({\boldsymbol \psi} \right) = \frac{1}{\sqrt{N}} \begin{bmatrix} 
 e^{-j\frac{2\pi }{\lambda}\left| {\boldsymbol \psi}  -{\boldsymbol \psi}_1\right|} &\cdots &  e^{-j\frac{2\pi }{\lambda}\left| {\boldsymbol \psi}  -{\boldsymbol \psi}_N\right|}
 \end{bmatrix}^T,
 \end{align} 
  ${\boldsymbol \psi}_n$ denotes the location of the $n$-th element of ULA,   $\gamma_{{\boldsymbol \psi}^{\rm L}_m} = \frac{c^2}{16\pi^2 f_c^2 \left| {\boldsymbol \psi}^{\rm L}_m -{\boldsymbol \psi}_0\right|^2}$, $c$, and $f_c$ denote the free-space path losses,   the speed of light, and    the carrier frequency, respectively, $n_m^{\rm L}$ denotes the additive Gaussian noise with power $P_N$, $\mathbf{p}_m$ denotes the beamforming vector, and $x_m$ denotes the signal sent on $\mathbf{p}_m$.  The polar coordinates of the  legacy   and   NOMA users' locations  are denoted by $(r^{\rm L}_m,\theta^{\rm L}_m)$,  and  $(r^{\rm N}_k,\theta^{\rm N}_k)$, respectively.  We note that, in \eqref{near},     the path losses from different elements of the ULA  to a user is assumed to be almost  same, since the users' distances to the base station are considered to be proportional to the Rayleigh distance in this letter.  

As in \cite{10129111}, it is assumed that $\mathbf{p}_m$ has been  preconfigured according to the $m$-th legacy user's channel vector, and the aim of the letter is to investigate the feasibility to serve additional NOMA users by using the preconfigured beams. As commonly used in the massive MIMO literature,    the maximal ratio combining type of precoding is used, i.e., $\mathbf{p}_m=\mathbf{b}\left({\boldsymbol \psi}^{\rm L}_m\right)$, which means that $\mathbf{p}_m^H\mathbf{p}_m=1$, and $|\mathbf{p}_m^H\mathbf{h}_m|^2= N \gamma_{{\boldsymbol \psi}^{\rm L}_m} $. 

Consider that there are $K$ NOMA users, denoted by ${\rm U}_k^{\rm N}$, $k\in\{1, \cdots, K\}$, and ${\rm U}_k^{\rm N}$ receives the following observation: $
y_k^{\rm N} = \mathbf{g}_k^H\sum^{K}_{i=1}\mathbf{p}_ix_i+n_k^{\rm N}$,
where $\mathbf{g}_k$ and $n_k^{\rm N}$ are defined similar to $\mathbf{h}_m$ and $n_k^{\rm L}$. 

To reduce the system complexity, assume that at most a single NOMA user is scheduled on each beam. If    ${\rm U}_k^{\rm N}$ is scheduled on   beam $\mathbf{p}_m$,    the base station sends $x_m =\sqrt{P_{S}\alpha_m^{\rm L}}s_m^{\rm L}+\sqrt{P_{S}\alpha_m^{\rm N}}s_k^{\rm N}$ on this beam, where $P_{S}$ denotes the transmit power budget on each beam,  $s_m^{\rm L}$ and $s_k^{\rm N}$ denote the signals to   ${\rm U}_m^{\rm L}$ and  ${\rm U}_k^{\rm N}$, respectively,   $\alpha_m^{\rm L}$ and $\alpha_m^{\rm N}$ denote the NOMA power allocation coefficients, and $ \alpha_m^{\rm N}+ \alpha_m^{\rm L}=1$. To avoid disruption to the legacy users' quality of experience, the NOMA users' signals will be decoded first on each beam,  and  the data rate of the NOMA user,  ${\rm U}_k^{\rm N}$, on beam $\mathbf{p}_m= \mathbf{b}\left({\boldsymbol \psi}^{\rm L}_m\right)$ is capped  by {$
R^{\rm N}_k = \log\left( 
1+\frac{
P_{S}N\gamma_{{\boldsymbol \psi}^{\rm N}_k}\alpha_m^{\rm N}|\mathbf{b}\left({\boldsymbol \psi}^{\rm L}_m\right)^H\mathbf{b}\left({\boldsymbol \psi}^{\rm N}_k\right)|^2
}{P_{S}N\gamma_{{\boldsymbol \psi}^{\rm N}_k}\alpha_m^{\rm L}|\mathbf{b}\left({\boldsymbol \psi}^{\rm L}_m\right)^H\mathbf{b}\left({\boldsymbol \psi}^{\rm N}_k\right)|^2+I^{\rm N}_k+P_N}
\right)$}, where $I^{\rm N}_k=\sum^{M}_{\substack{i=1\\ i\neq m}}P_SN\gamma_{{\boldsymbol \psi}^{\rm N}_k}|\mathbf{b}\left({\boldsymbol \psi}^{\rm L}_i\right)^H\mathbf{b}\left({\boldsymbol \psi}^{\rm N}_k\right)|^2$.
The legacy user is expected to have a strong channel gain, and therefore is capable to carry   out successive interference cancellation (SIC)   by decoding  its partner's signal with the following data rate: $
\tilde{R}^{\rm L}_m = \log\left( 
1+\frac{
P_{S}N\gamma_{{\boldsymbol \psi}^{\rm L}_m} \alpha_m^{\rm N}|\mathbf{b}\left({\boldsymbol \psi}^{\rm L}_m\right)^H\mathbf{b}\left({\boldsymbol \psi}^{\rm L}_m\right)|^2
}{P_{S}N\gamma_{{\boldsymbol \psi}^{\rm L}_m} \alpha_m^{\rm L}|\mathbf{b}\left({\boldsymbol \psi}^{\rm L}_m\right)^H\mathbf{b}\left({\boldsymbol \psi}^{\rm L}_m\right)|^2+I_m^{\rm L}+P_N}
\right)$, where $I_m^{\rm L}=\sum^{M}_{\substack{i=1\\ i\neq m}}P_SN\gamma_{{\boldsymbol \psi}^{\rm L}_m} |\mathbf{b}\left({\boldsymbol \psi}^{\rm L}_i\right)^H\mathbf{b}\left({\boldsymbol \psi}^{\rm L}_m\right)|^2$. If the first stage of SIC is successful, the legacy user can decodes its own message with the following data rate:
$
R^{\rm L}_m = \log\left( 
1+\frac{
P_{S}N\gamma_{{\boldsymbol \psi}^{\rm L}_m} \alpha_m^{\rm L}|\mathbf{b}\left({\boldsymbol \psi}^{\rm L}_m\right)^H\mathbf{b}\left({\boldsymbol \psi}^{\rm L}_m\right)|^2
}{ I_m^{\rm L}+P_N}
\right)$.

\section{Resolution of Near-Field Beamforming } 

The resolution of near-field beamforming can be defined as   $\Delta \triangleq  \left|\mathbf{b}\left({\boldsymbol \psi}_1\right)^H\mathbf{b}\left({\boldsymbol \psi}_2\right)\right|^2$,  for any ${\boldsymbol \psi}_1\neq {\boldsymbol \psi}_2$.   As can be seen from the expressions of $R^{\rm N}_k$, $\tilde{R}^{\rm L}_m$, and $R^{\rm L}_m$,  $\Delta$ has an important impact on these achievable data rates. 

%Zhiguo_mmwave

Define $\left(r_i,\theta_i \right)$ as the polar coordinates corresponding to the Cartesian coordinates ${\boldsymbol \psi}_i$, $i\in\{1,2\}$.  We note that the resolution in the angle  domain, i.e., $\Delta$ with $r_1=r_2$, has been well studied in the mmWave literature \cite{Zhiguo_mmwave}, and therefore this letter focuses on the resolution in the distance  domain, i.e., $\Delta$ with $\theta_1=\theta_2$. In \cite{ldmadai},  the resolution of near-field beamforming in the distance domain is shown to be asymptotically perfect, i.e., $\Delta \triangleq  \left|\mathbf{b}\left({\boldsymbol \psi}_1\right)^H\mathbf{b}\left({\boldsymbol \psi}_2\right)\right|^2\rightarrow 0$, for $N\rightarrow \infty$, if $\theta_1=\theta_2$. The following lemma illustrates that the resolution of near-field beamforming in the distance domain is poor, if $r_i$, $i\in\{1,2\}$, are proportional to the Rayleigh distance.  
 
 \begin{lemma}\label{lemma1}
Assume that $\theta_1=\theta_2\triangleq\theta_0$ and $r_i=\beta_id_{\rm Ray}$. If $\tau\triangleq  (1-\sin^2\theta_0) \left(\frac{  1}{\beta_1}-\frac{ 1}{\beta_2}\right)\rightarrow 0$, $\Delta$ can be approximated as follows: 
\begin{align}
\Delta 
\approx&     
     1-\frac{ \pi^2\tau^2 (N+1)\left(
    26N^2-38    \right) }{5760(N-1)^3}.
\end{align}
  \end{lemma}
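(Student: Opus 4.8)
The plan is to start from the phasor-sum form $\mathbf{b}(\boldsymbol{\psi}_1)^{H}\mathbf{b}(\boldsymbol{\psi}_2)=\frac{1}{N}\sum_{n=1}^{N}e^{j\phi_n}$, with $\phi_n=\frac{2\pi}{\lambda}\big(|\boldsymbol{\psi}_1-\boldsymbol{\psi}_n|-|\boldsymbol{\psi}_2-\boldsymbol{\psi}_n|\big)$, and to reduce $\phi_n$ through the second-order spherical-wave (Fresnel) expansion $|\boldsymbol{\psi}_i-\boldsymbol{\psi}_n|\approx r_i-\delta_n\sin\theta_i+\frac{\delta_n^{2}\cos^{2}\theta_i}{2r_i}$, where $\delta_n=(n-\frac{N+1}{2})d$ is the signed offset of the $n$-th element from the array centre and $d=\frac{\lambda}{2}$. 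Since $\theta_1=\theta_2=\theta_0$, the beam-steering terms linear in $\delta_n$ are identical for the two users and cancel, while the constant $r_1-r_2$ is a global phase that vanishes under $|\cdot|^{2}$. Collecting the surviving quadratic term and inserting $r_i=\beta_i d_{\rm Ray}$ together with $d_{\rm Ray}=\frac{2D^{2}}{\lambda}$, $D=(N-1)d$, the phase collapses to
\[
\phi_n\approx a\,m_n^{2},\qquad m_n=n-\tfrac{N+1}{2},\qquad a=\frac{\pi\tau}{2(N-1)^{2}},
\]
which cleanly isolates $\tau=(1-\sin^{2}\theta_0)\big(\frac{1}{\beta_1}-\frac{1}{\beta_2}\big)$ and is independent of $\lambda$ and $d$.

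Next, since $\tau\to0$ forces $a\to0$, I would expand for small $a$. Writing $\big|\sum_n e^{ja m_n^{2}}\big|^{2}=\sum_{n,n'}\cos\!\big(a(m_n^{2}-m_{n'}^{2})\big)$ removes the imaginary part by the $m\mapsto-m$ symmetry of the centred index set, and a Taylor expansion of the cosine returns $N^{2}$ at order $a^{0}$, nothing at order $a^{1}$, and the leading correction $-\frac{a^{2}}{2}\sum_{n,n'}(m_n^{2}-m_{n'}^{2})^{2}=-a^{2}\big(NS_4-S_2^{2}\big)$, with $S_{2}=\sum_n m_n^{2}$ and $S_{4}=\sum_n m_n^{4}$. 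Thus $\Delta\approx 1-\frac{a^{2}}{N^{2}}\big(NS_4-S_2^{2}\big)$, and what remains is to evaluate the two finite power sums in closed form, assemble and factor $NS_4-S_2^{2}$, and substitute $a=\frac{\pi\tau}{2(N-1)^{2}}$ to land on a rational function of $N$ multiplying $\pi^{2}\tau^{2}$.

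The main obstacle is the exact finite-$N$ bookkeeping of $NS_4-S_2^{2}$: both $NS_4$ and $S_2^{2}$ are $O(N^{6})$ and their leading parts largely cancel, so the numerator polynomial stated in the lemma is controlled by the sub-leading terms of $S_2$ and $S_4$, which must be kept exactly rather than asymptotically. A companion subtlety is certifying which contributions genuinely survive at order $\tau^{2}$: I would check that the cubic and quartic terms of the distance expansion (carrying $\delta_n^{3}/r^{2}$ and $\delta_n^{4}/r^{3}$, hence extra factors of $1/N$) are suppressed relative to the quadratic phase and do not enter the leading coefficient, and that the quartic term of the cosine expansion is merely $O(a^{4})=O(\tau^{4})$. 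Completing the power-sum algebra with $N^{2}-1=(N-1)(N+1)$ then produces a numerator of the form $(N+1)\times(\text{quadratic in }N)$ over $(N-1)^{3}$; pinning down the precise integer coefficients of that quadratic — which are sensitive to exactly which terms one retains — so as to match the stated $26N^{2}-38$ with overall constant $5760$ is the delicate computation this sketch defers.
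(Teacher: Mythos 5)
Your plan follows the paper's proof essentially step for step: the same Fresnel expansion $|\boldsymbol{\psi}_i-\boldsymbol{\psi}_n|\approx r_i-\delta_n\sin\theta_0+\frac{\delta_n^2(1-\sin^2\theta_0)}{2r_i}$, the same cancellation of the linear and constant phase terms, the same reduction to a centred quadratic phase $a\,m_n^2$ with $a=\frac{\pi\tau}{2(N-1)^2}$ after inserting $r_i=\beta_i d_{\rm Ray}$, and the same small-$\tau$ Taylor expansion followed by the power sums $S_2,S_4$. The one structural difference is that you expand $\bigl|\sum_n e^{jam_n^2}\bigr|^2=\sum_{n,n'}\cos\bigl(a(m_n^2-m_{n'}^2)\bigr)$ directly, whereas the paper expands the real and imaginary parts of the single sum separately.

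The genuine problem is the step you defer. Your (correctly normalized) expansion gives $\Delta\approx 1-\frac{a^2}{N^2}\bigl(NS_4-S_2^2\bigr)$ with $S_2=\frac{N(N^2-1)}{12}$ and $S_4=\frac{N(N^2-1)(3N^2-7)}{240}$, so that
\begin{align}
NS_4-S_2^2=\frac{N^2(N^2-1)(N^2-4)}{180},
\qquad
\Delta\approx 1-\frac{\pi^2\tau^2\,(N+1)(N^2-4)}{720\,(N-1)^3},
\end{align}
which is \emph{not} the stated $1-\frac{\pi^2\tau^2(N+1)(26N^2-38)}{5760(N-1)^3}$; asymptotically your coefficient is $\frac{\pi^2\tau^2}{720}$ versus the lemma's $\frac{13\pi^2\tau^2}{2880}$, a factor of $13/4$. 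The mismatch is not a bookkeeping subtlety you can tune away: it traces to the paper's step $(a)$, which uses $\cos(x)\approx 1-x^2$ rather than $1-\frac{x^2}{2}$ (thereby doubling the $NS_4$ contribution to $2NS_4-S_2^2$), together with a slip in its fourth-power sum (the paper's $3N^2-4$ should be $3N^2-7$, as one checks at $N=3$). So the "delicate computation" you postpone cannot land on $26N^2-38$ over $5760$ from your starting point; carried out faithfully, your derivation yields the expression above, and to reproduce the lemma as printed you would have to import the paper's missing factor of $\tfrac12$. You should either complete the algebra and report the corrected coefficient, or explicitly flag that the stated constants are inconsistent with the second-order Taylor expansion of the cosine.
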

\begin{proof}
See Appendix \ref{proof1}.
\end{proof}
Define $f(x) =\frac{   (x+1)\left(
    26x^2-38    \right)}{ (x-1)^3}$ whose first-order derivative is given by
$f'(x) = \frac{-104x^2+24x+152}{(x-1)^4}$.  The largest root of $f'(x)=0$ is $1.33$, which means $f'(n)<0$, for $n=\{ 2,3,\cdots \}$. Therefore,   $f(N)$ is a monotonically decreasing function of $N$, which leads to the following corollary. 
\begin{corollary}\label{corollary1}
Assume that $\theta_1=\theta_2\triangleq\theta_0$ and $r_m=\beta_md_{\rm Ray}$. If $\tau\triangleq  (1-\sin^2\theta_0) \left(\frac{  1}{\beta_1}-\frac{ 1}{\beta_2}\right)\rightarrow 0$, $\Delta$ is a monotonically increasing function of $N$, and with $N\rightarrow \infty$, $\Delta$ becomes a constant as follows:
\begin{align}
\Delta 
 \rightarrow   1-\frac{13 \pi^2\tau^2  }{2880 } . 
\end{align}
\end{corollary}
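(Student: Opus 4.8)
The plan is to build directly on the approximation of $\Delta$ established in Lemma~\ref{lemma1} and on the properties of the auxiliary function $f(x)$ introduced immediately before the corollary. First I would recast the conclusion of Lemma~\ref{lemma1} in the compact form
\begin{align}
\Delta \approx 1 - \frac{\pi^2\tau^2}{5760}\,f(N),
\end{align}
where $f(x) = \frac{(x+1)\left(26x^2-38\right)}{(x-1)^3}$. This rewriting isolates the entire $N$-dependence into the single factor $f(N)$. Since $\tau$ and the prefactor $\frac{\pi^2\tau^2}{5760}$ are nonnegative quantities that do not depend on $N$, the monotone behaviour of $\Delta$ in $N$ is governed completely by that of $-f(N)$.

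Next I would establish the monotonicity. From the derivative $f'(x) = \frac{-104x^2+24x+152}{(x-1)^4}$, whose denominator is positive for $x>1$, the sign of $f'(x)$ is controlled by the numerator quadratic $-104x^2+24x+152$. Its largest root is $1.33$, so $f'(x)<0$ for all $x>1.33$, and in particular $f'(N)<0$ for every integer $N\geq 2$. Hence $f(N)$ is strictly decreasing in $N$, so $-f(N)$ is strictly increasing, and therefore $\Delta = 1 - \frac{\pi^2\tau^2}{5760}f(N)$ is a monotonically increasing function of $N$, as claimed.

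Finally I would evaluate the limit. Expanding the numerator and denominator of $f$ shows that both are cubic polynomials, with leading coefficients $26$ and $1$, respectively, so $\lim_{N\to\infty} f(N)=26$. Substituting this limit yields
\begin{align}
\Delta \to 1 - \frac{\pi^2\tau^2}{5760}\cdot 26 = 1 - \frac{13\pi^2\tau^2}{2880},
\end{align}
where the last equality uses $\frac{26}{5760}=\frac{13}{2880}$, which matches the stated constant.

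I expect no substantive obstacle here: the result is a corollary in the literal sense, reducing to a monotonicity check already prepared in the preceding paragraph together with a routine rational-function limit. The only points requiring care are confirming that $f'(x)$ is negative on the whole integer range $N\geq 2$ (not merely at isolated points), so that monotonicity of the discrete sequence follows from negativity of the continuous derivative, and correctly carrying out the arithmetic simplification of the limiting constant.
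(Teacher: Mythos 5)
Your proposal is correct and follows essentially the same route as the paper: the paragraph preceding the corollary defines the same $f(x)$, uses the same derivative $f'(x)=\frac{-104x^2+24x+152}{(x-1)^4}$ and its largest root $1.33$ to conclude $f(N)$ is decreasing for $N\geq 2$, and the limit follows from $f(N)\to 26$ with $\frac{26}{5760}=\frac{13}{2880}$. No discrepancies to report.
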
 
{\it Remark 1:}   Lemma \ref{lemma1} and Corollary \ref{corollary1} indicate that the resolution of NF beamforming is not perfect in the distance domain. This features facilitates the implementation of NOMA, where a beam preconfigured to a legacy user can be still useful to those users which are located on the direction of the beam. 

  {\it Remark 2}: We note that $\tau $ is small if the two users are relatively close and  $\theta$ is moderately  large.  For example, for the case with $\beta_1=0.5$, $\beta_2=0.7$ and $\theta=80^{\circ}$, $\tau\approx 0.017$, and for the case with $\beta_1=0.5$, $\beta_2=0.55$ and $\theta=50^{\circ}$, $\tau\approx 0.075$. Note that for the case with $\beta_1=0.5$ and $\beta_2=0.55$, the distance between the two users is still over $70$ m, if the carrier frequency is $28$ GHz and the base station has   $513$ antennas.

{\it Remark 3:} We note that the   analytical results in Lemma \ref{lemma1} and Corollary \ref{corollary1} are not general, but applicable to the special cases with $\tau\rightarrow 0$ only. In \cite{ldmadai}, it was stated that $\Delta 
 \rightarrow 0$ for $N\rightarrow \infty$, which is true only if one or both the users are very close to the base station, which makes Lemma \ref{lemma1} not applicable.  Nevertheless,  for   users which are not too  close from the base station, the resolution of near-field beamforming is limited, which motivates the implementation of NOMA in NF-SDMA networks, as shown in the following section.

 \section{A Feasibility Study for Application of NOMA}
 
Lemma \ref{lemma1} and Corollary \ref{corollary1} indicate that a preconfigured near-field beam can be still used to admit additional users which  are not at the focal point of the beam. Therefore, in this section, we focus on a special case with a single legacy user, i.e., $M=1$, where  the locations of the NOMA users follow a Poisson line process \cite{Haenggi}. In particular, assume that  the NOMA users' locations follow a one-dimensional homogeneous Poisson point process (HPPP) with density $\lambda$, i.e.,  all  the NOMA users are assumed to be  on the segment which is between the legacy user and the cell boundary and  aligned with the preconfigured beam, i.e., $ \theta^{\rm L}_1= \theta^{\rm N}_k\triangleq  \theta_0$, $k=1, 2, \cdots K$. A practical example is that the legacy user is on a street and the preconfigured beam is aligned with the street, where it is reasonable to assume that there are multiple users and devices on the street. 

  Furthermore,  assume that the NOMA users are ordered according to their distances to the legacy user, and ${\rm U}_k^{\rm N}$ is the $k$-th closest neighbour of the legacy user, i.e., $d_1\leq d_2\leq \cdots$, where   $d_k = \left| {\boldsymbol \psi}^{\rm N}_k -{\boldsymbol \psi}^{\rm L}_1\right|$.  The distance between ${\rm U}_1^{\rm L}$ and the base station is denoted by $r_{1}^{\rm L}$.   
The cumulative distribution function (CDF) of $d_k$ is the probability to have less than $k$ users on a segment with the length $d_k$ \cite{Haenggi}: 
\begin{align}
F_k(r) =&  1-\sum^{k-1}_{i=0}
e^{-\lambda r}\frac{\lambda^i r^i}{i!},
\end{align} 
which means that the probability density function (pdf) of $d_k$ can be obtained as follows: 
\begin{align}
f_k(r)  
=& e^{-\lambda r}\left(\sum^{k-1}_{i=0}
\frac{1 }{i!}
\lambda^{i+1}r^i 
-\sum^{k-1}_{i=0}
\frac{1 }{i!}i \lambda^ir^{i-1}
\right) 
 \\\nonumber
=& e^{-\lambda r} 
\frac{1 }{(k-1)!}
\lambda^{k}r^{k-1}. 
\end{align} 

For the considered special case with $M=1$,   the NOMA user's data rate can be simplified as follows: 
 $
R^{\rm N}_k = \log\left( 
1+\frac{
P_{S}N\gamma_{{\boldsymbol \psi}^{\rm N}_k}\alpha_1^{\rm N}|\mathbf{b}\left({\boldsymbol \psi}^{\rm L}_1\right)^H\mathbf{b}\left({\boldsymbol \psi}^{\rm N}_k\right)|^2
}{P_{S}N\gamma_{{\boldsymbol \psi}^{\rm N}_k}\alpha_1^{\rm L}|\mathbf{b}\left({\boldsymbol \psi}^{\rm L}_1\right)^H\mathbf{b}\left({\boldsymbol \psi}^{\rm N}_k\right)|^2 +P_N}
\right)$, which means that the outage probability for the NOMA user ${\rm U}^{\rm N}_k$ is given by
{\small \begin{align}
&\mathbb{P}^o =\sum^{k-1}_{i=0}\mathbb{P}(K=i) + \sum^{\infty}_{i=k}\mathbb{P}(K=i)\\\nonumber &\times  \mathbb{P}\left(\log\left( 
1+\frac{
P_{S}N\gamma_{{\boldsymbol \psi}^{\rm N}_k}\alpha_1^{\rm N}|\mathbf{b}\left({\boldsymbol \psi}^{\rm L}_1\right)^H\mathbf{b}\left({\boldsymbol \psi}^{\rm N}_k\right)|^2
}{P_{S}N\gamma_{{\boldsymbol \psi}^{\rm N}_k}\alpha_1^{\rm L}|\mathbf{b}\left({\boldsymbol \psi}^{\rm L}_1\right)^H\mathbf{b}\left({\boldsymbol \psi}^{\rm N}_k\right)|^2 +P_N}
\right)\leq R\right) ,
\end{align}}
\hspace{-0.5em}where $R$ denotes the NOMA users' target data rate, and $\mathbb{P}(K=i)$ denotes the probability of the event that there are $i$ NOMA users on the segment. 

 Assume that $\theta_0$ (or equivalently $\theta^{\rm L}_1$) is large and $\tau_k\triangleq  (1-\sin^2\theta_0) \left(\frac{  d_{\rm Ray}}{r_1^{\rm L}}-\frac{ d_{\rm Ray}}{r_1^{\rm L}+d_k}\right) $ is small. The use of Lemma \ref{lemma1} can simplify the expression of  the outage probability as follows:  
\begin{align}
\mathbb{P}^o  \approx&\sum^{k-1}_{i=0}\mathbb{P}(K=i) + \sum^{\infty}_{i=k}\mathbb{P}(K=i)\times\\\nonumber &
\mathbb{P}\left(\frac{
 \frac{\eta_2}{   \left(r^{\rm L}_1+r\right)^2}\alpha_1^{\rm N}\left(1-\eta_1  \left(\frac{  1}{r^{\rm L}_1}-\frac{ 1}{r^{\rm L}_1+r}\right)^2\right)
}{ \frac{\eta_2}{   \left(r^{\rm L}_1+r\right)^2}\alpha_1^{\rm L}\left(1-\eta_1  \left(\frac{  1}{r^{\rm L}_1}-\frac{ 1}{r^{\rm L}_1+r}\right)^2\right) +P_N}\leq \epsilon_1\right)
\\\nonumber =&\sum^{k-1}_{i=0}\mathbb{P}(K=i) + \sum^{\infty}_{i=k}\mathbb{P}(K=i)\\\nonumber &\times
\mathbb{P}\left(
 \frac{\eta_2}{   \left(r^{\rm L}_1+r\right)^2} \left(1-\eta_1  \left(\frac{  1}{r^{\rm L}_1}-\frac{ 1}{r^{\rm L}_1+r}\right)^2\right) \leq \epsilon_2\right),
\end{align}
where    $\epsilon_1=2^R-1$, $\epsilon_2 = \frac{P_N\epsilon_1}{\alpha_1^{\rm N}-\alpha_1^{\rm L}\epsilon_1}$,  $\eta_1=\frac{ \pi^2 (N+1)\left(
    26N^2-38    \right)}{5760(N-1)^3}(1-\sin^2\theta_0)^2d_{\rm Ray}^2$, and $\eta_2=\frac{P_{S}N c^2}{16\pi^2 f_c^2  }$. 

Denote the  four roots of the following equation by $z_1\geq\cdots\geq z_4$: 
\begin{align}
 -\eta_1x^4+\frac{  2\eta_1}{r^{\rm L}_1}x^3 +\left(1- \frac{  \eta_1}{\left(r^{\rm L}_1\right)^2}\right) x^2    - \frac{\epsilon_2}{\eta_2} =0.
\end{align}
We note that the roots are constants and    not related to the users' random locations. For the considered geometric scenario with large $d_{\rm Ray}$, we note that there are   two positive roots, i.e.,  $z_1\geq z_2>0$. Therefore, the outage probability can be calculated as follows:
\begin{align}
\mathbb{P}^o    
=&\sum^{k-1}_{i=0}\mathbb{P}(K=i) + \sum^{\infty}_{i=k}\mathbb{P}(K=i)\\\nonumber &\times
\left[\mathbb{P}\left(  R_D-r^{\rm L}_1\geq  r \geq \max\left\{0,\frac{1}{z_2} - r^{\rm L}_1\right\}\right)\right.
\\\nonumber &\left.+
\mathbb{P}\left(  r \leq \min\left\{ \max\left\{0,\frac{1}{z_1} - r^{\rm L}_1\right\} ,  R_D-r^{\rm L}_1\right\}\right)
\right]
\end{align}

By using the pdf of $d_k$, the outage probability is given by
\begin{align}
\mathbb{P}^o \approx&   
 \sum^{k-1}_{i=0}\mathbb{P}(K=i) + \sum^{\infty}_{i=k}\mathbb{P}(K=i)\frac{1 }{(k-1)!}  \\\nonumber &\times
\left(
 \gamma\left(k,\lambda \left(R_D-r^{\rm L}_1\right)\right)- \gamma\left(k,\lambda  \max\left\{0,\frac{1}{z_2} - r^{\rm L}_1\right\}\right) \right.
 \\\nonumber
 &+\left.\gamma\left(k,\lambda \min\left\{ \max\left\{0,\frac{1}{z_1} - r^{\rm L}_1\right\} ,  R_D-r^{\rm L}_1\right\}\right)\right),
\end{align}
where  \cite[equation 3.381]{GRADSHTEYN} is used and $\gamma(\cdot)$ denotes the incomplete gamma function.  

With some algebraic manipulations, the following corollary can be obtained for the outage probability.
\begin{corollary}\label{corollary2}
By assuming that the NOMA users follows the one-dimensional HPPP and $\tau_k \rightarrow 0$, the outage probability for the $k$-th nearest neighbour to the legacy user is given by
\begin{align}
\mathbb{P}^o \approx &\sum^{k-1}_{i=0}e^{-\lambda (R_D-r^{\rm L}_1)}\frac{\lambda^i (R_D-r^{\rm L}_1)^i}{i!}\\\nonumber &+ \sum^{\infty}_{i=k}e^{-\lambda (R_D-r^{\rm L}_1)}\frac{\lambda^i (R_D-r^{\rm L}_1)^i}{i!} \frac{1 }{(k-1)!}  
\\\nonumber &\times
\left(
 \gamma\left(k,\lambda \left(R_D-r^{\rm L}_1\right)\right)- \gamma\left(k,\lambda  \max\left\{0,\frac{1}{z_2} - r^{\rm L}_1\right\}\right) 
 \right.
 \\\nonumber
 &+\left.\gamma\left(k,\lambda \min\left\{ \max\left\{0,\frac{1}{z_1} - r^{\rm L}_1\right\} ,  R_D-r^{\rm L}_1\right\}\right)\right). 
\end{align} 
\end{corollary}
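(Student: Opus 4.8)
The plan is to build $\mathbb{P}^o$ from two ingredients and then merge them: the law of the number $K$ of NOMA users lying on the beam-aligned segment, and, conditioned on $K\ge k$, the probability that the $k$-th nearest neighbour's distance $d_k$ lands in the outage region. Because the NOMA users form a one-dimensional HPPP of intensity $\lambda$ restricted to the segment of length $R_D-r^{\rm L}_1$ between the legacy user and the cell boundary, $K$ is Poisson with mean $\lambda(R_D-r^{\rm L}_1)$, so $\mathbb{P}(K=i)=e^{-\lambda (R_D-r^{\rm L}_1)}\frac{\lambda^i (R_D-r^{\rm L}_1)^i}{i!}$. When $K<k$ the $k$-th neighbour does not exist and outage is declared, which contributes the first sum $\sum_{i=0}^{k-1}\mathbb{P}(K=i)$.

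For the conditional term I first reduce the rate event. Invoking Lemma~\ref{lemma1} with $\tau_k\to 0$ replaces the correlation $|\mathbf{b}({\boldsymbol \psi}^{\rm L}_1)^H\mathbf{b}({\boldsymbol \psi}^{\rm N}_k)|^2$ by $1-\eta_1\left(\frac{1}{r^{\rm L}_1}-\frac{1}{r^{\rm L}_1+r}\right)^2$ and the path loss by $\frac{\eta_2}{(r^{\rm L}_1+r)^2}$, so that, after isolating this common channel factor, $R^{\rm N}_k\le R$ becomes $g(r)\triangleq \frac{\eta_2}{(r^{\rm L}_1+r)^2}\left(1-\eta_1\left(\frac{1}{r^{\rm L}_1}-\frac{1}{r^{\rm L}_1+r}\right)^2\right)\le \epsilon_2$, with $\epsilon_1=2^R-1$ and $\epsilon_2=\frac{P_N\epsilon_1}{\alpha_1^{\rm N}-\alpha_1^{\rm L}\epsilon_1}$. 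The substitution $x=\frac{1}{r^{\rm L}_1+r}$ (strictly decreasing in $r$) turns this into the quartic inequality $p(x)\le 0$ displayed before the corollary. Since $p$ has a negative leading coefficient and $p(0)=-\frac{\epsilon_2}{\eta_2}<0$, once I establish that in the regime of interest $p$ has exactly two positive roots $z_1\ge z_2$ with $z_2<\frac{1}{r^{\rm L}_1}<z_1$, the sign pattern forces $p(x)\le 0$ precisely for $x\le z_2$ or $x\ge z_1$. Mapping back through $r=\frac{1}{x}-r^{\rm L}_1$ and intersecting with $r\in[0,R_D-r^{\rm L}_1]$ yields the clipped events $\{R_D-r^{\rm L}_1\ge r\ge \max\{0,\frac{1}{z_2}-r^{\rm L}_1\}\}$ and $\{r\le \min\{\max\{0,\frac{1}{z_1}-r^{\rm L}_1\},R_D-r^{\rm L}_1\}\}$.

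I then evaluate these interval probabilities with the contact-distance pdf $f_k(r)=e^{-\lambda r}\frac{\lambda^k r^{k-1}}{(k-1)!}$. Using $\int_0^y r^{k-1}e^{-\lambda r}\,dr=\lambda^{-k}\gamma(k,\lambda y)$ from \cite[eq.~3.381]{GRADSHTEYN}, every $\mathbb{P}(r\le y)$ collapses to $\frac{1}{(k-1)!}\gamma(k,\lambda y)$, and the two events above produce the three terms $\gamma(k,\lambda(R_D-r^{\rm L}_1))-\gamma(k,\lambda\max\{0,\frac{1}{z_2}-r^{\rm L}_1\})+\gamma(k,\lambda\min\{\max\{0,\frac{1}{z_1}-r^{\rm L}_1\},R_D-r^{\rm L}_1\})$. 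Weighting by $\mathbb{P}(K=i)$ for $i\ge k$ and inserting the explicit Poisson pmf gives the stated closed form.

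The main obstacle is the root/sign analysis of the quartic: I must confirm that in the large-$d_{\rm Ray}$ regime exactly two of the four roots are positive and that $\frac{1}{r^{\rm L}_1}$ sits between them, since this pins down which side of each root is outage and hence which clipped interval is active (typically only the far-user interval $r\ge \frac{1}{z_2}-r^{\rm L}_1$, with $\max\{0,\frac{1}{z_1}-r^{\rm L}_1\}=0$ so that the near term contributes $\gamma(k,0)=0$). A secondary caveat is that $f_k$ is the exact contact distribution only for an HPPP on the whole half-line; using it on the finite segment, and treating the conditional outage as common to all $i\ge k$, is the approximation responsible for the ``$\approx$'' in the statement.
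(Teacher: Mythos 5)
Your proposal is correct and follows essentially the same route as the paper: decompose over the Poisson count $K$ on the segment, apply Lemma~\ref{lemma1} to reduce the rate event to the quartic inequality in $x=\frac{1}{r^{\rm L}_1+r}$ with two positive roots, map back to clipped intervals in $r$, and integrate the pdf of $d_k$ via the incomplete gamma function. Your added sign analysis of the quartic and the remark on the half-line versus finite-segment approximation are sensible refinements of steps the paper simply asserts, not a different method.
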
  

 \section{Numerical Studies}
In this section, computer simulation results are presented to demonstrate the resolution of near-field beamforming, and its impact on the implementation of NOMA. For all the carried out simulations, the carrier frequency of $28$ GHz is used, the noise power is $-80$ dBm, and the antenna spacing for the ULA is set to be half of the wavelength.

In Figs. \ref{fig1} and \ref{fig2x}, the impact of $\theta_0$ on the resolution of near-field beamforming, $\Delta \triangleq  \left|\mathbf{b}\left({\boldsymbol \psi}_1\right)^H\mathbf{b}\left({\boldsymbol \psi}_2\right)\right|^2$,  for any ${\boldsymbol \psi}_1\neq {\boldsymbol \psi}_2$, is investigated. In Fig. \ref{fig1}, the two users' distances to the base station are proportional to the Rayleigh distance, i.e.,  $r_i=\beta_id_{\rm Ray}$, $i\in\{1,2\}$. In this case, Fig. \ref{fig1} shows that the resolution of near-field beamforming is poor, particular if $\theta_0$ is large. In addition, the figure also demonstrates that the new approximation results shown in Lemma \ref{lemma1} are more accurate that the existing one in \cite{ldmadai}, particularly for large $\theta_0$.    In Fig. \ref{fig2x}, the users are deployed   close to the base station, i.e.,  their distances to the base station are no longer proportional to the Rayleigh distance. In this case, Fig. \ref{fig2x} demonstrates that the resolution of near-field beamforming becomes accurate, particularly if the users are very close to the base station and the number of antennas becomes very large.  

  \begin{figure}[t]\centering \vspace{-0.5em}
    \epsfig{file=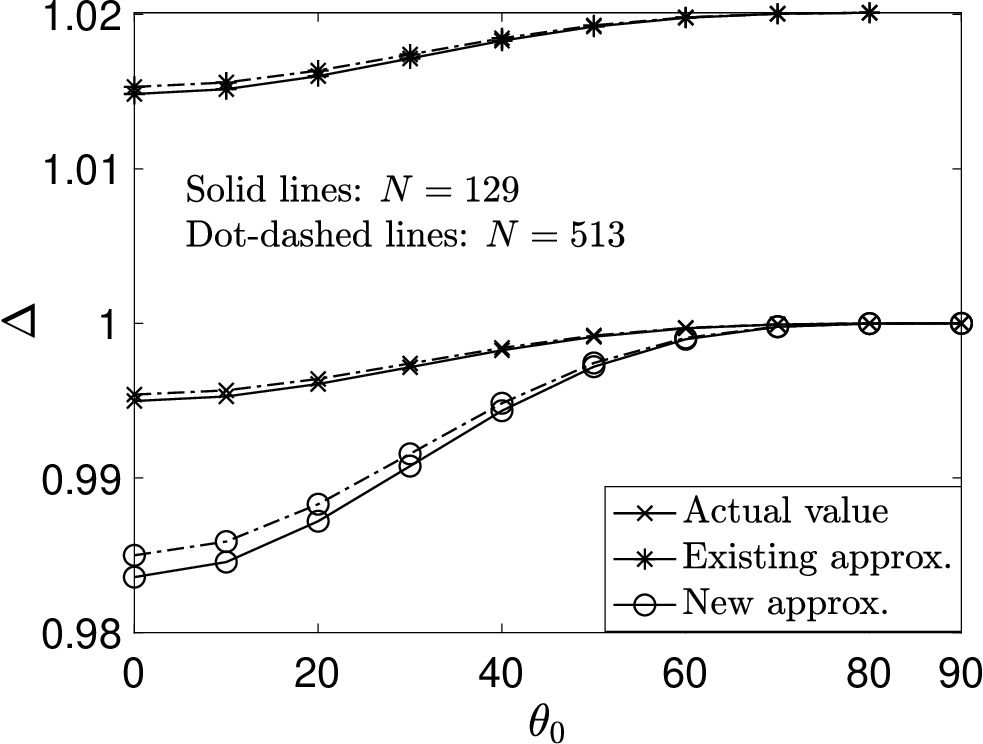, width=0.33\textwidth, clip=}\vspace{-0.5em}
\caption{Impact of $\theta_0$ on the resolution of near-field beamforming, $\Delta$, where $\theta_i=\theta_0$,  $r_i=\beta_id_{\rm Ray}$, $i\in\{1,2\}$, $\beta_1=0.5$ and $\beta_2=0.2+\beta_1$.  \vspace{-1em}    }\label{fig1}   \vspace{-0em} 
\end{figure}

   \begin{figure}[t]\centering \vspace{-0em}
    \epsfig{file=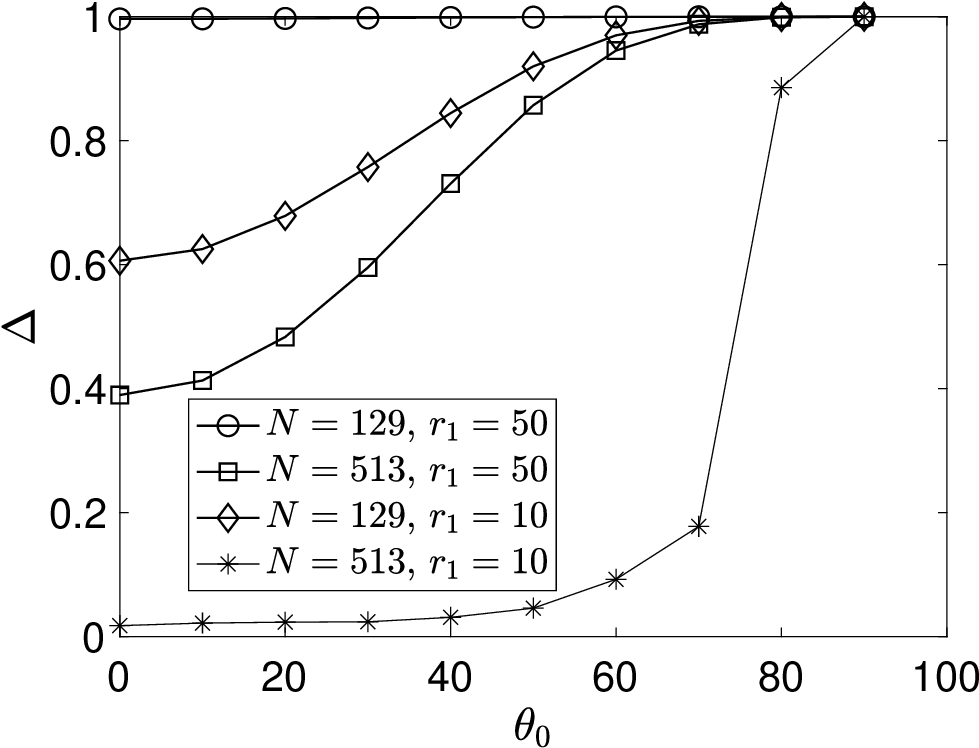, width=0.33\textwidth, clip=}\vspace{-0.5em}
\caption{Impact of $\theta_0$ on the resolution of near-field beamforming, $\Delta$, where $\theta_i=\theta_0$, $i\in\{1,2\}$, and $r_1-r_2=20$.  \vspace{-1em}    }\label{fig2x}   \vspace{-0.1em} 
\end{figure}

  \begin{figure}[t] \vspace{-0.1em}
\begin{center}
\subfigure[ $N=129$,  $\theta_0=45^0$]{\label{fig2a}\includegraphics[width=0.35\textwidth]{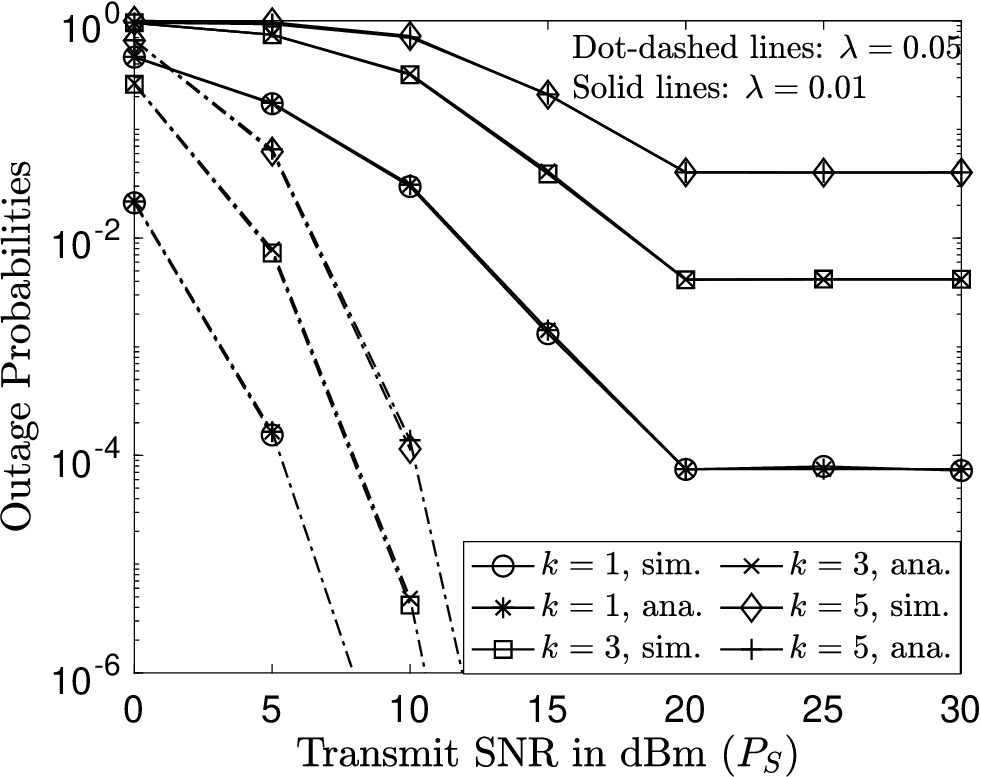}}\hspace{2em}
\subfigure[$\lambda=0.01$ and $k=1$]{\label{fig2b}\includegraphics[width=0.35\textwidth]{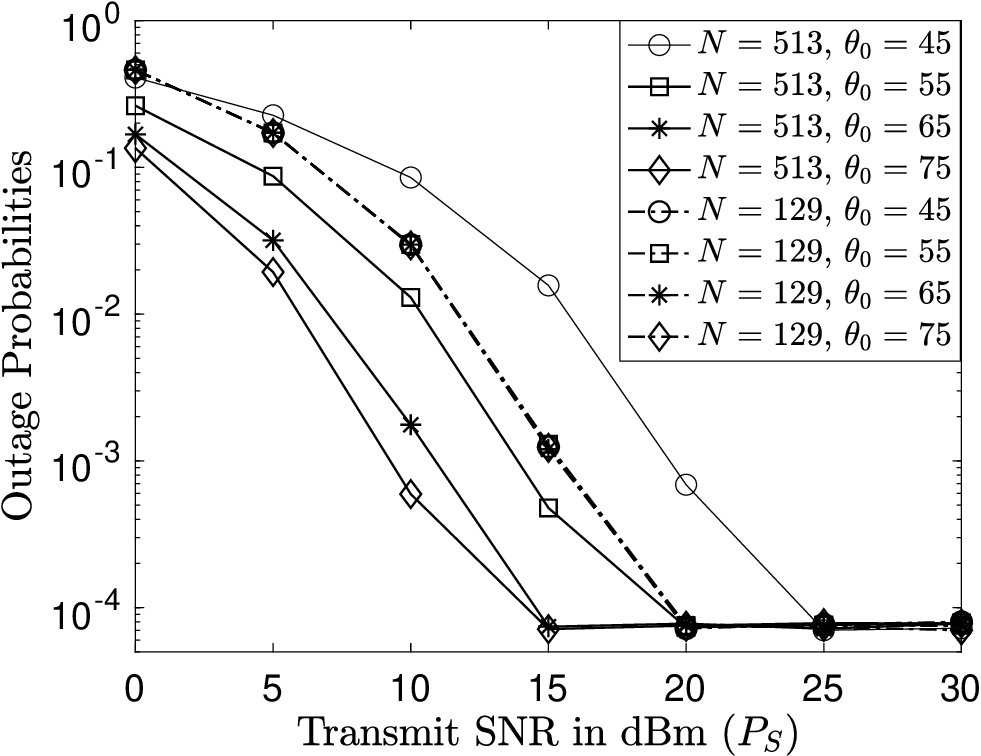}} \vspace{-1em}
\end{center}
\caption{Outage probability of the scheduled NOMA user by using the Poisson line process.  $M=1$, $r^{\rm L}_1=50$, $R_D=1000$, $R=0.5$ BPCU,    $\alpha_m^{\rm N}=\frac{4}{5}$ and  $\alpha_m^{\rm L}=\frac{1}{5}$. \vspace{-0.1em} }\label{fig2}\vspace{-2em}
\end{figure}

  Figs. \ref{fig1} and \ref{fig2x} show that the resolution of near-field beamforming is poor, unless the users are clustered close to the base station, which demonstrates the feasibility of using NOMA, as the users are most likely to be scattered   in the network. In Fig. \ref{fig2}, the NOMA users are assumed to be randomly deployed on the segment which is on the direction of ${\rm U}_1^{\rm L}$'s channel vector and between  ${\rm U}_1^{\rm L}$ and the boundary of the cell, where there is a single legacy user, $M=1$, the radius of the cell is denoted by $D_R$ and the $k$-th  closest neighbour to the legacy user  is scheduled. Fig. \ref{fig2a} shows that the scheduled  NOMA user's outage probability can be reduced significantly either by increasing the user intensity, $\lambda$, or by reducing the distance between the scheduled user and the legacy user, i.e., reducing $k$. Fig. \ref{fig2b}  shows that the user's outage probability can be further reduced by increasing $\theta_0$ for the case of $N=513$, whereas the impact of $\theta_0$ on the outage probability is insignificant for the case of $N=129$. Fig. \ref{fig3a} also demonstrates the accuracy of the developed outage probability expression shown in Corollary \ref{corollary2}. 

A natural extension of Fig.  \ref{fig2} is to consider that the locations of the NOMA users follow a HPPP   in the whole cell, instead of on a line. However, for a cell with   radius being $1000$ m, the averaged number of the NOMA users is already above $3000$ even if $\lambda=0.01$, which makes it challenging to carry out  Monte Carlo simulations. Therefore, in Fig. \ref{fig3},  a Poisson cluster process is adopted \cite{Haenggi},  where the NOMA users are clustered within   a disc with radius $R_{c}$,  the center of the disc is uniformly located within the cell, and the   NOMA user with the best channel gain is scheduled. In addition, assume that   the legacy users are equally spaced and located on a semicircle with radius $50$ m. As can be seen from the figure, the scheduled NOMA user can be served with a reasonable outage probability, which means that the near-field beams preconfigured to   the legacy users can be used to efficiently serve NOMA users randomly located within the cell.  We note that increasing   $R$ can increase the achievable outage rate, but decrease the outage probability. Furthermore, it is important to point out that the use of more antennas can effectively increase the outage data rate of the NOMA user, an important property given the fact that massive MIMO is envisioned to be employed in future wireless networks.

  \begin{figure}[t] \vspace{-0.1em}
\begin{center}
\subfigure[ Outage Probabilites]{\label{fig3a}\includegraphics[width=0.33\textwidth]{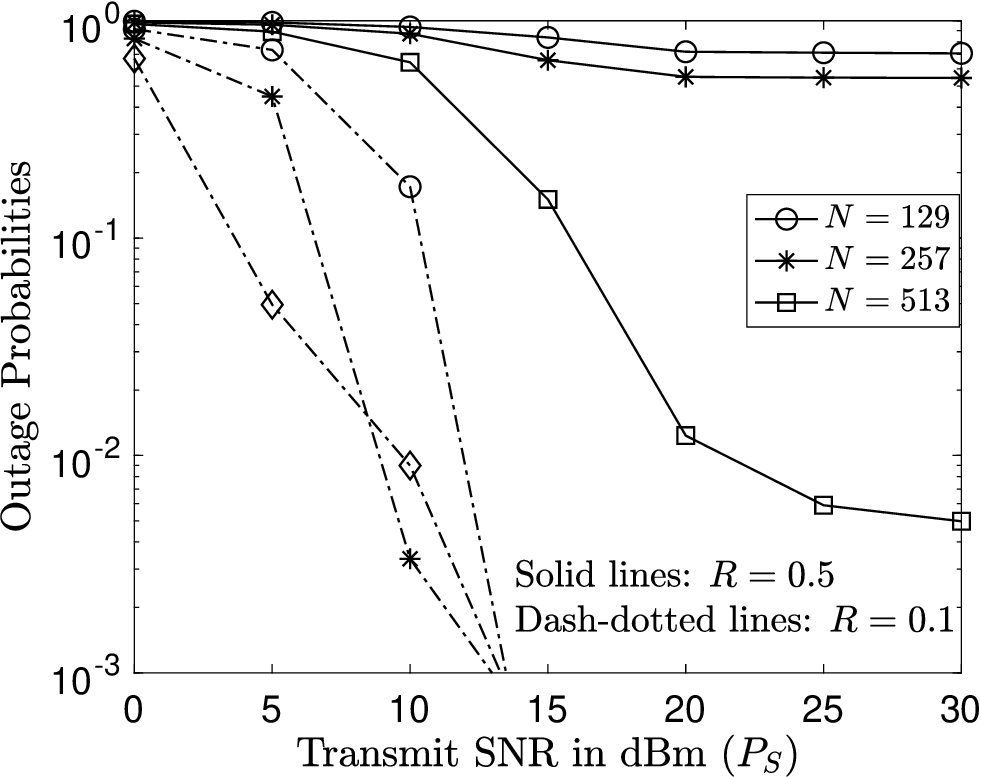}}\hspace{2em}
\subfigure[Outage Rates]{\label{fig3b}\includegraphics[width=0.33\textwidth]{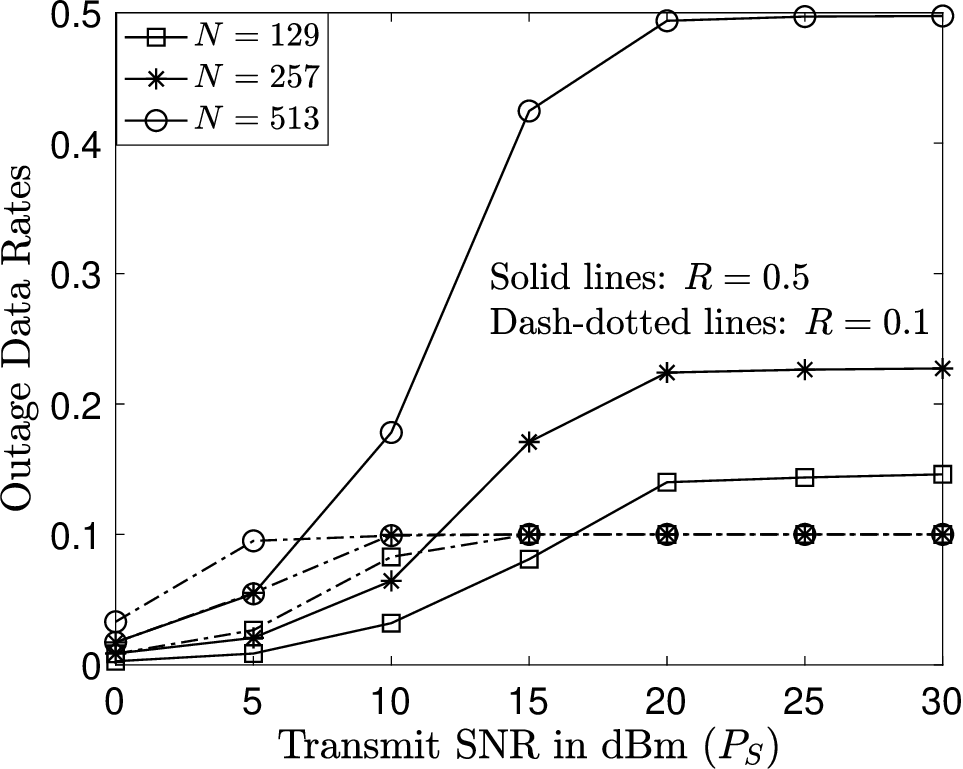}} \vspace{-1em}
\end{center}
\caption{Outage probability of the scheduled NOMA user by using the Poisson cluster process. $M=36$, $\lambda=0.05$, $r^{\rm L}=50$, $R_D=1000$, $R_c=10$,     $\alpha_m^{\rm N}=\frac{4}{5}$ and  $\alpha_m^{\rm L}=\frac{1}{5}$.  \vspace{-0.1em} }\label{fig3}\vspace{-2em}
\end{figure}

\vspace{-1em}
\section{Conclusions}
In this letter,   the condition under which the resolution of near-field beamforming is not perfect has been identified.   This imperfect resolution means that one user's near-field beam can be still useful to other users, which motivates the feasibility  study for the application of NOMA.   Both the   analytical and simulation results have been presented to demonstrate that those near-field beams preconfigured for legacy users can indeed be used  to effectively serve additional NOMA users, which improves the overall connectivity and system throughput.  We note that the analytical results developed in this letter are applicable for the special cases with $\tau\rightarrow 0$, where an important direction for future research is to develop a more general expression of $\Delta$. 
 %%%%%%%%%%%%%%%%%%%%%%%%%
 \appendices
 \section{Proof for Lemma \ref{lemma1}}\label{proof1}
 By using the spherical-wave propagation model, the resolution of near-field beamforming, which is also the cross-correlation between the users' channel vectors, i.e., $\Delta \triangleq  \left|\mathbf{b}\left({\boldsymbol \psi}_1\right)^H\mathbf{b}\left({\boldsymbol \psi}_2\right)\right|^2$,   can be expressed as follows: 
\begin{align}
\Delta =  & \frac{1}{N^2}
\left|\sum^{N}_{n=1} e^{-j\frac{2\pi }{\lambda}\left(\left| {\boldsymbol \psi}_1 -{\boldsymbol \psi}_n\right|-\left| {\boldsymbol \psi}_2 -{\boldsymbol \psi}_n\right|\right)} \right|^2. 
\end{align}
Recall that each  distance can be expressed as $
\left| {\boldsymbol \psi}_m -{\boldsymbol \psi}_n\right| = r_m\sqrt{1+\frac{d_n^2-2r_md_n\sin\theta_m}{r_m^2}}$, for $m\in\{1,2\}$, 
where $d_n = d\left(n-1-\frac{N-1}{2}\right)$ and $d$ denotes the antenna spacing of the ULA. Note that $r_m$ is assumed to be proportional to $d_{\rm Ray}$, and hence $r_m\gg  d_n$, which means $
\frac{d_n^2-2r_md_n\sin\theta_m}{r_m^2}\rightarrow 0$ and hence leads to the following approximation \cite{BJORNSON20193, 9738442, ldmadai}: 
\begin{align}\label{appro1}
\left| {\boldsymbol \psi}_m -{\boldsymbol \psi}_n\right| 
 \approx  r_m\left(1-\frac{  d_n\sin\theta_m}{r_m}+\frac{ d_n^2(1-\sin^2\theta_m) }{2r_m^2}\right),
\end{align}
where  the following approximation, $(1+x)^{\frac{1}{2}}\approx 1+\frac{1}{2}x-\frac{1}{8}x^2$ for $x\rightarrow 0$, is used. 

By applying the approximation in \eqref{appro1}, $\Delta$ can be simplified as follows: 
\begin{align}\nonumber
\Delta\approx &  \frac{1}{N^2} \left|
\sum^{N}_{n=1} e^{-j\frac{2\pi }{\lambda}\left(\frac{ d_n^2(1-\sin^2\theta_0) }{2r_1}-\frac{ d_n^2(1-\sin^2\theta_0) }{2r_2}\right)} \right|^2
\\\label{approx2}
=&  \frac{1}{N^2}  \left|
\sum^{\frac{N-1}{2}}_{k=-\frac{N-1}{2}} e^{j\pi  k^2 d^2\frac{1}{\lambda} (1-\sin^2\theta_0) \left(\frac{  1}{r_1}-\frac{ 1}{r_2}\right)} 
\right|^2 .
\end{align}
In the literature, the above sum was shown to go to zero if $N\rightarrow \infty$, which is not accurate if the users' distances to the base station are proportional to the Rayleigh distance.

In particular, assume  that $r_1=\beta_1d_{\rm Ray}$ and $r_2=\beta_2d_{\rm Ray}$, where $d_{\rm Ray}$ denotes the Rayleigh distance and is given by $d_{\rm Ray}= \frac{2d^2(N-1)^2}{\lambda}$. By substituting the expressions of $r_1$ and $r_2$ to \eqref{approx2}, $\Delta$ can be expressed as follows:
\begin{align}
\Delta\approx&     \frac{1}{N^2} \left|
\sum^{\frac{N-1}{2}}_{k=-\frac{N-1}{2}} e^{j\pi  k^2 \frac{1}{  2(N-1)^2} (1-\sin^2\theta_0) \left(\frac{  1}{\beta_1}-\frac{ 1}{\beta_2}\right)} 
\right|^2 .
\end{align}

By using the expression of $\tau$, $\Delta$ can be expressed as follows: 
\begin{align}
\Delta  
=&   \frac{1}{N^2} \left|
\sum^{\frac{N-1}{2}}_{k=-\frac{N-1}{2}}  \cos\left(  \tilde{\theta}_k\right)+j\sin \left(\tilde{\theta}_k
\right)
\right|^2 ,
\end{align}
where $\tilde{\theta}_k=\frac{ \pi k^2}{  2(N-1)^2} \tau$.
If $\tau \rightarrow 0$, $\Delta$ can be further approximated as follows:  
\begin{align}
\Delta 
\overset{(a)}{\approx}&   \frac{1}{N^2} \left|N-
\sum^{\frac{N-1}{2}}_{k=-\frac{N-1}{2}}  \tilde{\theta}_k^2 +j \sum^{\frac{N-1}{2}}_{k=-\frac{N-1}{2}}  \tilde{\theta}_k
\right|^2 
\\\nonumber
\overset{(b)}{\approx}&      1-
\frac{2}{N}\sum^{\frac{N-1}{2}}_{k=-\frac{N-1}{2}}  \tilde{\theta}_k^2 +\frac{1}{N^2}\left(\sum^{\frac{N-1}{2}}_{k=-\frac{N-1}{2}} \tilde{\theta}_k\right)^2
\\\nonumber 
=&      1-
  \frac{ \pi^2\tau^2}{N  (N-1)^4}  \sum^{\frac{N-1}{2}}_{k=1}  k^4+   \frac{ \pi^2 \tau^2}{ N^2 (N-1)^4}  \left(\sum^{\frac{N-1}{2}}_{k=1}k^2\right)^2 ,
\end{align}
where step $(a)$      is obtained by using the   approximation: $\cos(x)\approx 1-x^2$ and $\sin(x)=x$ for $x\rightarrow 0$,  and step $(b)$ follows by applying the approximation $(1-x)^2\approx 1-2x$ for $x\rightarrow 0$. 

By using the two following finite sums,  $\sum_{k=1}^{n} k^4 = \frac{n(n+1)(2n+1)(3n^2+3n-1)}{30}$ and $\sum_{k=1}^{n} k^2 = \frac{n(n+1)(2n+1)}{6}$,   $\Delta$ can be further approximated as follows:
\begin{align}
\Delta 
\approx&    
       1-
  \frac{ \pi^2\tau^2}{ N (N-1)^4} \frac{N(N-1)(N+1)(3N^2-4)}{480}\\\nonumber &+   \frac{ \pi^2 \tau^2}{  N^2(N-1)^4}  \left(\frac{N(N-1)(N+1)}{24}\right)^2 .
\end{align}
With some straightforward algebraic manipulations, the approximated expression shown in the lemma can be obtained, and the proof is complete. 

\bibliographystyle{IEEEtran}
\bibliography{IEEEfull,trasfer}
  \end{document}